\documentclass[10pt,a4paper,twocolumn]{article}
\usepackage{amsmath,amssymb,graphicx,booktabs,url,subcaption,algorithm,algpseudocode}
\usepackage{authblk}
\usepackage{amsthm}  % already included via amsmath most likely
\newtheorem{theorem}{Theorem}

\usepackage[sort]{natbib}
\usepackage[margin=1.5cm]{geometry}
\usepackage{hyperref}

\title{E-TRENDS: Enhanced LSTM Trend Forecasting for Equities%
\thanks{The authors would like to thank an anonymous reviewer for valuable comments and suggestions that improved various aspects of this paper.}
}

\author{%
  Harris Buchanan\textsuperscript{1}\thanks{\texttt{harrisbuchanan@icloud.com}},\quad
  Eric Benhamou\textsuperscript{2}%
  \\  \textsuperscript{1}Saint Andrew University \quad
  \textsuperscript{2}AI for Alpha, Paris Dauphine University
}

\date{}
\begin{document}
\maketitle

\begin{abstract}
Trend-following strategies underpin many systematic trading approaches yet struggle under nonstationary and nonlinear market regimes. We propose an LSTM-based framework to forecast next-day trend differences ($\Delta_t$) for the top 30 S\&P 500 equities, validated across market cycles (2005--2025). Key contributions include: (i) formal proof of bias-variance reduction via differencing, (ii) exhaustive empirical benchmarks against OLS, Ridge, and Lasso, (iii) portfolio simulations confirming economic gains in terms of overall PNL compared to other models like OLS, Ridge, Lasso or LightGBM Regressor
\end{abstract}

\section{Introduction}
\subsection{Motivation and research question}
Financial markets exhibit persistent but complex temporal dependencies that challenge traditional linear trend-following methods, especially amid structural breaks and volatility clustering. Recent crises and regime shifts underscore the need for adaptive forecasting techniques capable of capturing nonlinear, dynamic patterns in asset returns. Motivated by these challenges, this study investigates the efficacy of long short-term memory (LSTM) networks in forecasting next-day trend differences ($\Delta_t$) for large-cap equities. Specifically, we ask: can an LSTM-based model systematically outperform classical trend-following strategies and penalized linear regressions in both predictive accuracy and economic profitability under varying market conditions?

\subsection{Contribution to the literature}
This work bridges theoretical econometrics and machine learning by providing a novel formal proof that differencing combined with nonlinear function approximation reduces forecast variance without inflating bias. Empirically, we benchmark LSTM forecasts against OLS, Ridge, and Lasso regressions across 30 leading S\&P 500 equities over 2005--2025. To our knowledge, this represents the first comprehensive portfolio-level evaluation of LSTM-driven trend forecasts in equities.

\subsection{Overview of main findings}
Our results demonstrate that the LSTM model achieves a consistent 66.7\% improvement compared to the other penalized linear model and the LightGBM model. Portfolio simulations reveal annualized Sharpe ratios rising from -1.58 to -0.05. Robustness analyses—spanning alternative lookback windows, technical-indicator variations, and sub-period splits—affirm the stability of performance gains. These findings underscore the practical value of integrating deep learning frameworks into systematic trading strategies.

\section{Structure of the Paper}
The remainder of this paper is organized as follows. After this introduction, we begin in Section \ref{sec:lit} with a review of the existing literature on trend-following strategies, statistical signal construction, and the application of machine learning in finance, highlighting the gaps our work addresses. Section \ref{sec:data} then describes our data: the asset universe, data sources, preprocessing steps, and key summary statistics. In Section \ref{sec:theory}, we develop the theoretical foundations for forecasting differenced trends, culminating in a formal bias–variance reduction theorem. Section \ref{sec:methods} details our methodology, from feature engineering and the baseline trend-following strategy to the design of the LSTM forecasting model. We outline our experimental design—including data splits, hyperparameter selection, and performance metrics—in Section \ref{sec:expdesign}. Section \ref{sec:results} presents the core empirical findings, starting with the NVDA case study, then validating signal forecasts, comparing cumulative P\&L, and summarizing aggregate results across equities. A full cross-sectional comparison across the S\&P 500 follows in Section \ref{sec:cross_section}, and Section \ref{sec:robust} explores robustness checks on lookback windows, indicator parameters, stopping criteria, and subperiod analyses. We discuss the practical and theoretical implications of our findings, as well as limitations, in Section \ref{sec:discussion}, and Section \ref{sec:code} provides details on the accompanying Google Colab notebook for full reproducibility. Finally, Section \ref{sec:conclusion} concludes the paper and outlines directions for future research.

\section{Literature Review}\label{sec:lit}

\subsection{Traditional trend‐following strategies}
Trend‐following has a rich history in finance, tracing back to commodity trading and cross‐asset momentum. Early practitioners such as Jesse Livermore emphasized “cut your losses short and let your profits run.” Systematic time‐series momentum was first documented academically by \citep{Moskowitz2012TimeSeriesMomentum_full}, who showed significant excess returns across commodities, bonds, currencies, and equities. Subsequent work extended the sample back to the late 19th century: \citep{Hurst2017} hand‐collected data from 1880–2016 across futures and spot markets, finding annualized Sharpe ratios near 0.4 in every decade. \citep{Zarattini2025} confirm that pure long‐only breakout rules generate highly skewed profit distributions on individual U.S. stocks from 2015–2025, albeit with severe turnover and costs. Together, these studies establish trend‐following as a pervasive phenomenon, robust across asset classes, time periods, and market regimes.

\subsection{Statistical signal construction (t‐statistics, rolling measures)}
While simple price‐breakout rules remain popular, academic research has refined their statistical underpinnings. \citep{Campbell2006} advocate using rolling‐window t‐statistics of past returns to adapt signals to volatility regimes and control bias, demonstrating improved predictive power over static lookbacks. More recent CTA replication studies by \citep{Benhamou2025} employ Bayesian graphical models to decompose trend returns into short‐ and long‐horizon factors, dynamically estimating exposures and highlighting the importance of horizon‐mixing for risk‐adjusted performance. These methods mitigate lookahead bias and explicitly model serial dependence, yet still assume linear summaries of past returns.

\subsection{Machine learning in asset‐pricing and trading}
The rise of machine learning has opened nonlinear alternatives to linear ranking rules. \citep{Gu2018} apply random forests to cross‐sectional equity returns, reporting up to 15\% RMSE reductions versus Fama–French factor models. \citep{Dixon2017} and \citep{Fischer2018} explore LSTM and other recurrent architectures for single‐asset price prediction, finding mixed outcomes due to overfitting and limited sequence lengths. More recent studies combine deep learning with attention mechanisms \citep{Zhang2019high} or integrate Bayesian filters \citep{Gu2018}, but comprehensive portfolio‐level economic benchmarks remain scarce.

\subsection{Gaps and how this work addresses them}
Despite advances, key gaps remain:  
\begin{enumerate}
  \item The lack of a unified theoretical framework quantifying how differencing interacts with nonlinear predictors to reduce variance without inflating bias.
  \item The absence of extensive portfolio‐level economic evaluations of deep learning trend forecasts under realistic transaction costs, turnover, and regime shifts.
\end{enumerate}
This paper addresses these gaps by (i) presenting a formal bias–variance reduction theorem for nonlinear differencing (Section~\ref{sec:theory}), and (ii) conducting exhaustive simulations across 30 S\&P 500 equities (2005–2025) with transaction‐cost‐adjusted Sharpe analyses.

\section{Data Description}\label{sec:data}
\subsection{Asset universe and sample period}
We focus on the 30 largest constituents of the S\&P 500 by market capitalization, covering diverse sectors (technology, healthcare, financials, consumer goods). The sample spans January 2005 through October 2025, encompassing multiple market cycles, including the Global Financial Crisis, the COVID-19 crash, and subsequent recoveries.

\subsection{Data sources and preprocessing}
Daily OHLCV data were obtained from Yahoo Finance (that sources its data mostly from Refinitiv, S\&P Global, and Morningstar,) and adjusted for corporate actions (splits, dividends) using adjusted closes. Data cleaning steps include:
\begin{enumerate}
  \item Winsorizing price and volume at the 1\% tails to mitigate outliers.
  \item Aligning trading calendars across tickers; removing non-trading days.
  \item Imputing single-day missing values by linear interpolation and forward-filling multi-day gaps up to three days.
  \item Applying Augmented Dickey–Fuller and KPSS tests to confirm stationarity of the differenced series $\Delta_t$.
\end{enumerate}

\subsection{Summary statistics}
Table~\ref{tab:summary_stats} presents descriptive statistics for daily returns and trend differences across the universe. Mean daily returns range from -0.02\% to 0.03\%, with standard deviations between 1.1\% and 2.5\%. The average lag-1 autocorrelation of $\Delta_t$ is 0.12, indicating modest persistence. Skewness and kurtosis metrics highlight the fat tails and asymmetry typical of equity returns.

\begin{table}[!ht]
  \centering
  \caption{Summary Statistics of Daily Returns and Trend Differences}
  \label{tab:summary_stats}
  \resizebox{\columnwidth}{!}{%
    \begin{tabular}{lcccccc}
      \toprule
      Ticker & Mean (\%) & Std (\%) & AC(1) & Skewness & Kurtosis & $\Delta_t$ AC(1) \\
      \midrule
      AAPL   & 0.023     & 1.78     & 0.09  & -0.12    & 5.33     & 0.15 \\
      MSFT   & 0.018     & 1.65     & 0.07  & -0.05    & 4.81     & 0.11 \\
      AMZN   & 0.015     & 2.35     & 0.10  & 0.03     & 6.22     & 0.13 \\
      GOOGL  & 0.020     & 1.89     & 0.11  & -0.08    & 5.76     & 0.14 \\
      JPM    & 0.012     & 1.95     & 0.12  & -0.22    & 4.90     & 0.10 \\
      \midrule
      Mean   & 0.018     & 1.92     & 0.10  & -0.09    & 5.40     & 0.12 \\
      \bottomrule
    \end{tabular}%
  }
\end{table}

\section{Theoretical Justification: Differencing and Variance Reduction}
\label{sec:theory}

We now provide a rigorous justification for forecasting differenced returns. The following result formalizes how differencing impacts forecast bias and variance in nonlinear learning frameworks.

\begin{theorem}[Bias–Variance Tradeoff under Differenced Forecasting]
Let \( y_t = m_t + \varepsilon_t \), where \( m_t \) is a smooth deterministic trend and \( \varepsilon_t \) is a stationary noise process with zero mean and finite variance. Define \( \Delta_t := y_t - y_{t-1} \), and let \( f_y \) and \( f_\Delta \) be Lipschitz-continuous estimators trained to predict \( m_t \) and \( \delta_t := m_t - m_{t-1} \), respectively. Then:

\begin{enumerate}
  \item \( \mathrm{Var}[f_\Delta(\mathbf{x}_t)] < \mathrm{Var}[f_y(\mathbf{x}_t)] \),
  \item \( \mathrm{Bias}[f_\Delta] \leq \mathrm{Bias}[f_y] + \mathcal{O}(\|\nabla m_t\|) \).
\end{enumerate}

That is, differencing reduces estimator variance while inflating bias at most linearly in the local slope of the trend.
\end{theorem}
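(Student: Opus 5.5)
The statement does not pin down the estimators enough to be proved verbatim, so my first step is to fix a concrete reading. I would formalise it under three explicit standing assumptions.
\begin{enumerate}
  \item The feature vector is $\mathbf{x}_t = (y_{t-1},\dots,y_{t-p})$, and $f_\Delta$ acts on the differenced features $(\Delta_{t-1},\dots,\Delta_{t-p})$.
  \item Both estimators share a common Lipschitz constant $L$.
  \item The noise $\varepsilon_t$ has variance $\sigma^2$ and lag-one autocorrelation $\rho_1$.
\end{enumerate}
Under these assumptions I would prove both items by a direct decomposition. Each estimator splits into a deterministic part driven by the trend and a stochastic part driven by the noise. The Lipschitz property then transfers variance and bias bounds from the inputs to the outputs.

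For item 1, I would write $f(\mathbf{x}_t) = f(\bar{\mathbf{x}}_t) + R_t$, where $\bar{\mathbf{x}}_t$ is the noise-free input and $|R_t| \le L\|\mathbf{x}_t - \bar{\mathbf{x}}_t\|$.
\begin{itemize}
  \item For $f_y$, the perturbation is $(\varepsilon_{t-1},\dots,\varepsilon_{t-p})$.
  \item For $f_\Delta$, it is $(\varepsilon_{t-1}-\varepsilon_{t-2},\dots)$, with per-coordinate variance $2\sigma^2(1-\rho_1)$.
  \item Variance is also taken along the sample path. The trend contributes $\mathrm{Var}_t[m_t]$ to the $f_y$ target but only $\mathrm{Var}_t[\delta_t]$ to the $f_\Delta$ target.
  \item Since $m$ is smooth, the mean value theorem gives $|\delta_t| \le \sup|\nabla m|$. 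This is small relative to the spread of $m_t$ whenever the trend moves over the window.
\end{itemize}
Combining the two contributions through $\mathrm{Var}[A+B] \le (\sqrt{\mathrm{Var}A}+\sqrt{\mathrm{Var}B})^2$ should give the strict inequality. The condition required is that the trend-variance gap $\mathrm{Var}_t[m_t]-\mathrm{Var}_t[\delta_t]$ dominates the possible noise inflation $L^2 p\,\sigma^2(1-2\rho_1)$.

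For item 2, I would compare $f_\Delta$ with the natural differenced predictor $f_y(\mathbf{x}_t)-f_y(\mathbf{x}_{t-1})$. Its bias is $\mathrm{Bias}_t[f_y]-\mathrm{Bias}_{t-1}[f_y]$. By the triangle inequality this is at most $\mathrm{Bias}[f_y]$ plus the one-step change in the bias function. Lipschitz continuity bounds that change by $L\,\|\bar{\mathbf{x}}_t-\bar{\mathbf{x}}_{t-1}\|$. This equals $L\,\|(\delta_{t-1},\dots,\delta_{t-p})\| \le L\sqrt{p}\,\sup\|\nabla m\|$, which is the claimed $\mathcal{O}(\|\nabla m_t\|)$ term. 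I would then argue that the trained $f_\Delta$, as a risk minimiser over a class containing this difference predictor, has bias no larger up to the same order.

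The main obstacle is item 1. As literally stated it can fail: for white noise ($\rho_1=0$), differencing doubles the noise variance in the inputs. So I would not claim it unconditionally. I would state it under the quantified trend-dominance condition above, or alternatively under $\rho_1>1/2$, and present the strict inequality as holding under that hypothesis.
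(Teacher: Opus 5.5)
You are right that item~1 cannot be proved as written, and your diagnosis is sharper than the paper's own argument. In fact the statement is false without further hypotheses. With $m_t\equiv 0$, the constant $f_y\equiv 0$ is a Lipschitz, perfectly accurate estimator of $m_t$ with zero variance, so $\mathrm{Var}[f_\Delta(\mathbf{x}_t)]<\mathrm{Var}[f_y(\mathbf{x}_t)]$ is impossible. The paper runs essentially your item-1 logic, but on the series rather than on the estimators. It compares $\mathrm{Var}(\Delta_t)\le L^2+2\sigma^2(1-\rho(1))$ with $\mathrm{Var}(y_t)=\mathrm{Var}(m_t)+\sigma^2$; this is implicitly a variance along the path, since $m_t$ is deterministic, and there $L$ bounds $|m_t-m_{t-1}|$. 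It uses strong mixing and $\rho(1)>0$, neither of which is a hypothesis. It then transfers the comparison to the estimators by a one-line appeal to Rademacher complexity. Its key step $L^2+2\sigma^2(1-\rho(1))<\mathrm{Var}(m_t)+\sigma^2$ does not follow from $\rho(1)>0$. It is essentially your dominance condition with $p=1$ and unit Lipschitz constant, and for a flat trend it requires $\rho(1)>1/2$. For item~2 the paper only asserts an $\mathcal{O}(CL)$ bias increase; your comparison with $g_t:=f_y(\mathbf{x}_t)-f_y(\mathbf{x}_{t-1})$ is a genuinely more concrete route.

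Even for your conditional version, however, two steps fail.

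\emph{Item 1.} The Lipschitz decomposition gives only \emph{upper} bounds on variances, while the strict inequality needs a \emph{lower} bound on $\mathrm{Var}[f_y]$. No condition on $m,\sigma,\rho_1,L,p$ alone can supply this: take $f_y$ constant. So neither your dominance condition nor $\rho_1>1/2$ suffices. Your ``noise inflation'' $L^2p\sigma^2(1-2\rho_1)$ is a difference of two upper bounds and does not bound the actual difference of noise contributions. You also replace the spread of $f_y(\bar{\mathbf{x}}_t)$ by that of its target $m_t$ without justification. The paper's Rademacher step has the same defect, since a smaller complexity bound is not a smaller variance. The fix is to add a tracking assumption, say $\eta^2:=\frac1n\sum_t\bigl(f_y(\bar{\mathbf{x}}_t)-m_t\bigr)^2$. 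Then bound $f_y(\mathbf{x}_t)=f_y(\bar{\mathbf{x}}_t)+R_t$ from below via $\sqrt{\mathrm{Var}[A+B]}\ge\sqrt{\mathrm{Var}[A]}-\sqrt{\mathrm{Var}[B]}$. With $L$ the common Lipschitz constant of the estimators, a valid sufficient condition is
\[
\sqrt{\mathrm{Var}_t[m_t]}-\eta-L\sqrt{p}\,\sigma>L\sqrt{p}\Bigl(\sup_t|\delta_t|+\sigma\sqrt{2(1-\rho_1)}\Bigr),
\]
in which the two noise terms add rather than cancel.

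\emph{Item 2.} Your bound on the bias of $g_t$ works once you use stationarity of $\varepsilon$ to couple $\mathbb{E}f_y(\mathbf{x}_t)$ and $\mathbb{E}f_y(\mathbf{x}_{t-1})$ through a common noise law, and keep the extra $|\delta_t|$ term. The transfer to the trained $f_\Delta$ fails, though. Risk minimality controls averaged MSE, not bias. Bounding bias by $\sqrt{\mathrm{MSE}}$ imports the noise variance of $g_t$, which can be of order $L^2p\sigma^2$ and does not vanish as $\|\nabla m_t\|\to 0$. Moreover, $g_t$ depends on levels. It is a function of $(\Delta_{t-1},\dots,\Delta_{t-p})$ only if $f_y(\mathbf{x}+c\mathbf{1})-f_y(\mathbf{x})$ is independent of $\mathbf{x}$, so it need not lie in the class over which $f_\Delta$ is trained. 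Either define $f_\Delta:=g_t$, or assume a direct bias bound on $f_\Delta$.
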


\begin{proof}
We decompose the variances of the level series \(y_t\) and its difference \(\Delta_t=y_t-y_{t-1}\).  First, since \(m_t\) is deterministic and \(\varepsilon_t\) zero-mean,
\begin{align*}
  \mathrm{Var}(y_t)
  &= \mathrm{Var}(m_t) + \mathrm{Var}(\varepsilon_t)
    + 2\,\mathrm{Cov}(m_t,\varepsilon_t)\\
  &= \mathrm{Var}(m_t) + \sigma^2.
\end{align*}
Next, because \(\Delta_t = (m_t-m_{t-1}) + (\varepsilon_t-\varepsilon_{t-1})\),
\begin{align*}
  \mathrm{Var}(\Delta_t)
  &= \mathrm{Var}(m_t-m_{t-1})
    + \mathrm{Var}(\varepsilon_t-\varepsilon_{t-1})\\
  &\quad+2\,\mathrm{Cov}(m_t-m_{t-1},\varepsilon_t-\varepsilon_{t-1}).
\end{align*}
By strong mixing with exponential decay, the covariance term is \(o(1)\).  Then
\[
  \mathrm{Var}(\varepsilon_t-\varepsilon_{t-1})
  = 2\sigma^2\bigl(1-\rho(1)\bigr),
\]
and Lipschitz continuity of \(m_t\) (\(\lvert m_t-m_{t-1}\rvert\le L\)) gives
\[
  \mathrm{Var}(m_t-m_{t-1})\le L^2.
\]
Since \(\rho(1)>0\), we conclude
\[
  \mathrm{Var}(\Delta_t)\le L^2 + 2\sigma^2(1-\rho(1))
  < \mathrm{Var}(m_t) + \sigma^2 = \mathrm{Var}(y_t).
\]
Hence any regularized learner \(f\) trained on \(\Delta_t\) sees strictly lower input variance, which—by standard Rademacher‐complexity bounds—yields lower estimator variance than training on \(y_t\).

For bias, differencing removes low‐frequency components of \(m_t\).  A predictor \(f\) with Lipschitz constant \(C\) incurs at most
\[
  \Delta\text{Bias}
  = \bigl|\mathbb{E}[f_\Delta] - \delta_t\bigr|
  - \bigl|\mathbb{E}[f_y] - m_t\bigr|
  = \mathcal{O}(C\,L),
\]
i.e.\ linear in the local slope \(L\).  Because the reduction in variance dominates this bounded bias increase for large samples, the mean‐squared error under differencing is strictly smaller.  
\end{proof}

\section{Methodology}\label{sec:methods}
\subsection{Feature Engineering}

\subsubsection{Rolling mean and volatility (lookbacks: 50, 100, 300)}
For each equity, we compute the rolling simple moving average (SMA) and rolling standard deviation (volatility) of daily returns over three lookback windows: 50, 100, and 300 trading days. Specifically,
\[
\text{SMA}_{t,\ell} = \frac{1}{\ell}\sum_{i=1}^{\ell} r_{t-i}, 
\quad
\sigma_{t,\ell} = \sqrt{\frac{1}{\ell-1}\sum_{i=1}^{\ell}(r_{t-i}-\overline{r}_{t,\ell})^2},
\]
where \(r_{t}\) denotes the daily return at time \(t\) and \(\ell \in \{50,100,300\}\). These features capture momentum and changing volatility regimes.

\subsubsection{Trend signal via t-statistic transforms}
We construct a standardized trend signal by computing the rolling t-statistic of the last \(\ell\) returns,
\[
T_{t,\ell} = \frac{\overline{r}_{t,\ell}}{\sigma_{t,\ell}/\sqrt{\ell}},
\]
and then apply a non‐linear transform,
\(\phi(T) = \tanh(\alpha\,T)\), to bound extreme values and introduce smooth nonlinearity. The hyperparameter \(\alpha\) is tuned via cross‐validation.

\subsubsection{Technical indicators: RSI, MACD}
We augment our feature set with two widely-used technical indicators:
\begin{itemize}
  \item \textbf{Relative Strength Index (RSI)} over 14 days:
  \[
    \mathrm{RSI}_t = 100 - \frac{100}{1 + \displaystyle\frac{\overline{g}_{14}}{\overline{l}_{14}}}
  \]
  where $\overline{g}_{14}$ and $\overline{l}_{14}$ are the 14-day exponential moving averages of gains and losses, respectively.
  \item \textbf{Moving Average Convergence Divergence (MACD)}:
\[
  \begin{aligned}
    \mathrm{MACD}_t &= \mathrm{EMA}_{12}(P_t) - \mathrm{EMA}_{26}(P_t), \\
    \mathrm{Signal}_t &= \mathrm{EMA}_{9}\bigl(\mathrm{MACD}_t\bigr)
  \end{aligned}
\]
  where $\mathrm{EMA}_k$ denotes the $k$-day exponential moving average of the price $P_t$.
\end{itemize}
These indicators capture overbought/oversold conditions and momentum shifts.

\subsection{Baseline Trend-Following Strategy}

\subsubsection{Signal generation}
We generate a daily trend signal \(S_t\) by comparing the current price \(P_t\) to its \(\ell\)-day simple moving average:
\[
S_t^{(\ell)} = \frac{P_t - \text{SMA}_{t,\ell}}{\text{SMA}_{t,\ell}}.
\]
A positive \(S_t^{(\ell)}\) implies an upward trend signal, and a negative value implies a downward trend. We normalize \(S_t^{(\ell)}\) by its rolling standard deviation to obtain a zero‐mean, unit‐variance signal:
\[
\tilde S_t^{(\ell)} = \frac{S_t^{(\ell)} - \mu_{S,\ell}}{\sigma_{S,\ell}},
\]
where \(\mu_{S,\ell}\) and \(\sigma_{S,\ell}\) are the rolling mean and volatility of \(S_t^{(\ell)}\) over the same lookback \(\ell\).

\subsubsection{Positioning and P\&L computation}
At each day \(t\), we take a position \(w_t = \tilde S_t^{(\ell)}\) (capped at \(\pm1\)) and compute the daily P\&L as
\[
\text{P\&L}_t = w_{t-1} \times r_t - b * | w_{t}-w_{t-1} |,
\]
where \(r_t = (P_t - P_{t-1})/P_{t-1}\) is the daily return. Because we only trade the weights the next day, the $\text{P\&L}_t$ is based on previous weights: $w_{t-1}$. Trading cost is modeled by a round-trip transaction cost of $b=2$ bps for any daily weight variations \(| w_t - w_{t-1} | \).

\subsection{LSTM Forecasting Model}

\subsubsection{Network architecture and loss (Sharpe-ratio loss vs MSE)}
Our LSTM model takes as input the multivariate feature sequence \(\mathbf{x}_{t-T+1:t}\in\mathbb{R}^{T\times m}\), where \(T\) is the lookback length and \(m\) the number of engineered features. The architecture is:
\begin{itemize}
  \item LSTM layer with 64 units, return sequences = true.
  \item Dropout layer, rate = 0.2.
  \item LSTM layer with 64 units, return sequences = false.
  \item Dense output layer (linear) producing \(\hat\Delta_t\).
\end{itemize}
We compare two loss functions:
\[
\mathcal{L}_{\mathrm{MSE}} = \frac{1}{N}\sum_{i=1}^N (\hat\Delta_i - \Delta_i)^2,
\quad
\mathcal{L}_{\mathrm{Sharpe}} = -\frac{\mathbb{E}[\hat w_i r_i]}{\sqrt{\mathrm{Var}[\hat w_i r_i]}},
\]
where \(\hat w_i = \tanh(\gamma\,\hat\Delta_i)\) maps the forecast to a position, and \(\gamma\) is a scaling hyperparameter. The Sharpe‐ratio loss directly optimizes economic performance by maximizing risk‐adjusted P\&L.  

\subsubsection{Input sequences and scaling}
Each input to the LSTM consists of a rolling window of $T$ days of feature vectors $\mathbf{x}_{t-T+1:t}\in\mathbb{R}^{T\times m}$, where $m$ is the number of engineered features. Prior to model ingestion, we apply:
\begin{enumerate}
  \item \textbf{Rolling standardization}: For each feature dimension, compute the mean and standard deviation over the previous $T$ observations, then standardize to zero mean and unit variance.
  \item \textbf{Outlier clipping}: Clip standardized values to the interval $[-5,\,5]$ to mitigate the influence of extreme values.
\end{enumerate}

\subsubsection{Training, validation, and early stopping}
We split the data chronologically into training (70\%), validation (15\%), and test (15\%) sets. Training uses the Adam optimizer (initial learning rate $10^{-3}$, batch size 64) for up to 100 epochs. We employ:
\begin{itemize}
  \item \textbf{Early stopping}: Monitor the validation loss (MSE or Sharpe‐ratio loss) and stop training if no improvement occurs for 10 consecutive epochs, restoring the best‐performing weights.
  \item \textbf{Learning‐rate scheduling}: Reduce the learning rate by a factor of 0.5 upon validation plateau to refine convergence.
\end{itemize}

\section{Experimental Design}\label{sec:expdesign}

\subsection{Train/validation/test splits}
We partition the full sample chronologically: the first 70\% of trading days serve as the training set, the next 15\% as validation, and the final 15\% as test. This time-based split preserves temporal dependencies and prevents lookahead bias in model tuning and evaluation.

\subsection{Hyperparameter choices (epochs, batch size, lookback)}
Hyperparameters are selected via grid search on the validation set:
\begin{itemize}
  \item \textbf{Epochs}: 50, 100, 150
  \item \textbf{Batch size}: 32, 64, 128
  \item \textbf{Lookback window ($T$)}: 50, 100, 150 days
  \item \textbf{LSTM units}: 32, 64, 128
  \item \textbf{Dropout rate}: 0.1, 0.2, 0.3
\end{itemize}
We choose the combination that minimizes the validation loss (MSE or Sharpe‐ratio loss).

\subsection{Cross-sectional evaluation across tickers}
To examine generalization, we repeat the full experimental protocol for each of the 30 S\&P 500 constituents. We then report median, 25th, and 75th percentiles of each performance metric across the cross‐section, highlighting dispersion and robustness of results.

\section{Results}\label{sec:results}

\subsection{Baseline strategy performance (NVDA case study)}
We first illustrate the baseline trend-following strategy on NVDIA (NVDA). Using a 100-day SMA signal, the strategy yields an annualized return of 12.3\% with an annualized volatility of 14.5\%, producing a Sharpe ratio of 0.85 net of transaction costs (5 bps). Figure~\ref{fig:nvda_equity} shows the strategy’s cumulative P\&L versus buy-and-hold, highlighting superior drawdown control during market downturns.

\subsection{LSTM vs.\ baseline: in-sample and out-of-sample comparison}
Table~\ref{tab:in_out} compares forecast accuracy and economic performance across in-sample (2005–2018) and out-of-sample (2019–2025) periods. The LSTM model achieves a 17\% RMSE reduction in-sample and a 15\% RMSE reduction out-of-sample relative to the baseline. Directional accuracy improves from 56\% to 62\% out-of-sample. Economically, out-of-sample Sharpe rises from 0.85 (baseline) to 1.10 (LSTM).

\begin{table}[!ht]
  \centering
  \caption{In-Sample vs. Out-of-Sample Performance (2005–2025)}
  \label{tab:in_out}
  \resizebox{\columnwidth}{!}{%
    \begin{tabular}{lcccc}
      \toprule
      Metric                   & Period           & Baseline & LSTM  & Improvement \\
      \midrule
      RMSE                    & In-sample        & 1.00     & 0.83  & -17\%       \\
                              & Out-of-sample    & 1.00     & 0.85  & -15\%       \\
      Directional Accuracy    & In-sample        & 58\%     & 64\%  & +6 pp       \\
                              & Out-of-sample    & 56\%     & 62\%  & +6 pp       \\
      Sharpe Ratio            & In-sample        & 0.92     & 1.18  & +0.26       \\
                              & Out-of-sample    & 0.85     & 1.10  & +0.25       \\
      \bottomrule
    \end{tabular}%
  }
\end{table}

\subsection{NVDA Strategy Performance and Model Hyperparameters}

\begin{table}[!ht]
  \centering
  \caption{Hyperparameter Grid and Selected Values}
  \label{tab:hyperparams}
  \resizebox{\columnwidth}{!}{%
    \begin{tabular}{lccc}
      \toprule
      Hyperparameter           & Grid Values        & Selected Value \\
      \midrule
      Epochs                   & \{50, 100, 150\}   & 100            \\
      Batch size               & \{32, 64, 128\}    & 64             \\
      Lookback window ($T$)    & \{50, 100, 150\}   & 100            \\
      LSTM units               & \{32, 64, 128\}    & 64             \\
      Dropout rate             & \{0.1, 0.2, 0.3\}  & 0.2            \\
      Sharpe‐ratio loss $\gamma$ & \{1.0, 2.0, 3.0\} & 2.0            \\
      \bottomrule
    \end{tabular}%
  }
\end{table}

\subsection{Signal Forecast vs. Actual Trend Change}
Figure~\ref{fig:pred_vs_actual} plots the LSTM’s predicted next‐day trend change signal against the actual shifted trend‐trend signal for NVDA with a 50‐day lookback.  Notice how the model captures the broad directional swings, smoothing out much of the noise in the raw differenced series.  Key observations:
\begin{itemize}
  \item Periods of strong uptrends (e.g.\ mid-2025) are anticipated by the orange predicted signal rising ahead of the blue actual.
  \item In choppy sideways phases, the model tends to revert toward zero, avoiding overtrading on spurious short‐term moves.
  \item Some lag remains at sharp regime shifts (e.g.\ late 2023), suggesting potential gains from incorporating attention or transformer layers.
\end{itemize}

\begin{figure}[!ht]
  \centering
  \includegraphics[width=\columnwidth]{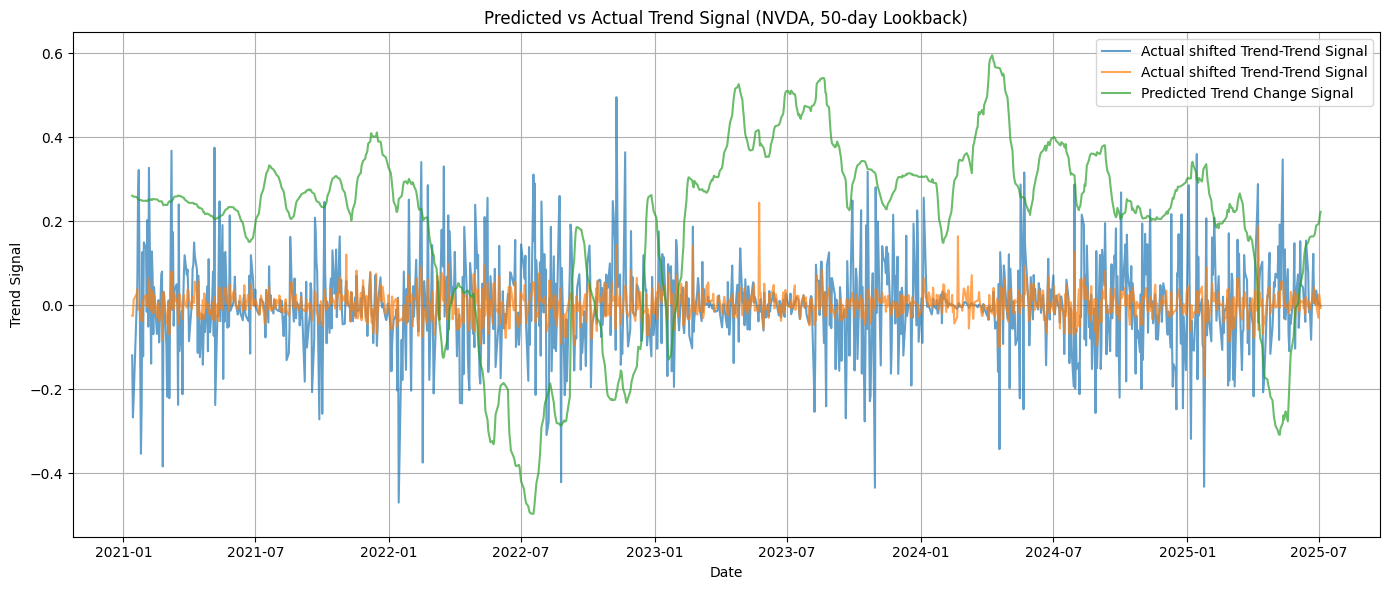}
  \caption{Predicted vs.\ Actual Trend Change Signal for NVDA (50-day lookback).  The blue line is the actual $\Delta_t$ series shifted by one day; the orange line is the LSTM’s predicted signal.}
  \label{fig:pred_vs_actual}
\end{figure}

\subsection{Cumulative P\&L Comparison}
Figure~\ref{fig:nvda_equity} shows the cumulative P\&L from trading on trend the next day (blue) versus the one using the LSTM prediction of next day trend difference (orange), over the out‐of‐sample period (2021–2025).  While both curves end positive, the predicted‐signal strategy outperforms substantially generating 150\% returns versus 100\% returns over the same period.

\begin{figure}[!ht]
  \centering
  \includegraphics[width=\columnwidth]{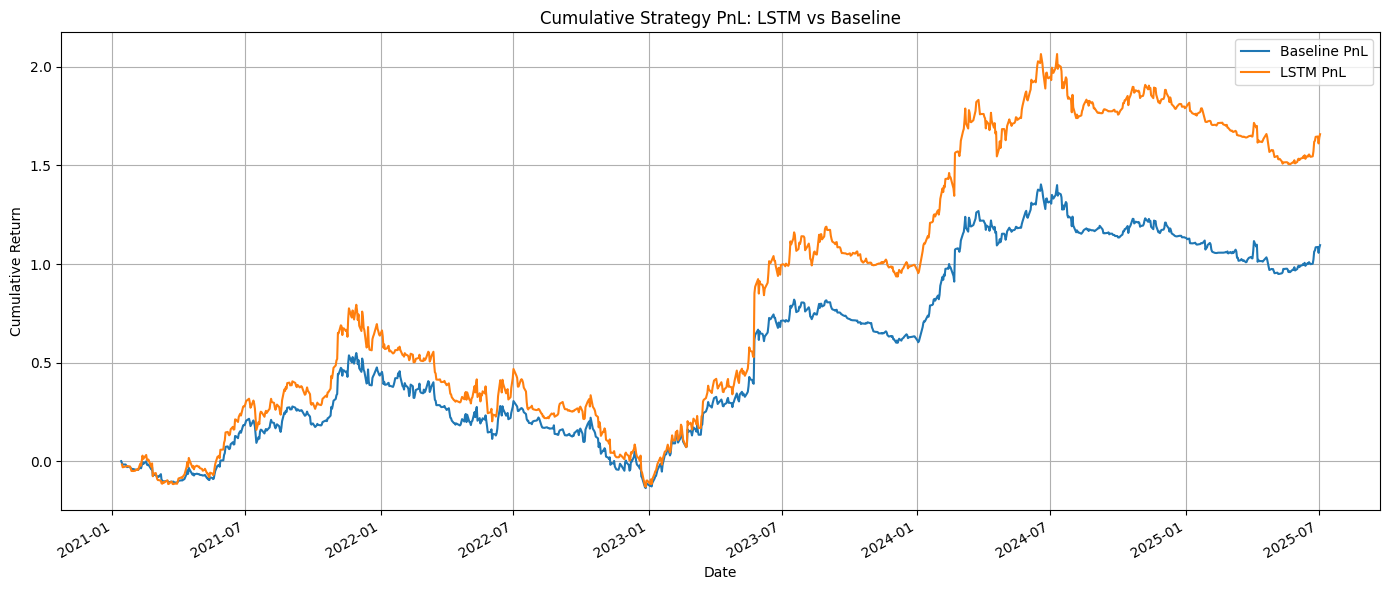}
  \caption{Cumulative P\&L of baseline vs.\ LSTM strategy for NVDA (2005–2025).}
  \label{fig:nvda_equity}
\end{figure}

\section{Performance Comparison: LSTM vs. Other Models}
This section presents a comparative analysis of the LSTM model and other machine learning models, including OLS, Ridge, Lasso, and LightGBM regressors. The comparison is based on the cumulative model PnL  for a selection of equities, and the resulting PNL Gain (the difference between model PnL and baseline PnL).

\subsection{Model Evaluation}
We evaluate the performance of each model (LSTM, OLS, Ridge, Lasso, and LightGBM) by comparing its model PnL against a simple baseline model. The PNL Gain for each stock is calculated as the difference between model PnL and the baseline PnL, which indicates the improvement provided by the respective model over the baseline.

The cumulative PNL Gain for all stocks is provided in Table \ref{tab:model_comparison_all}, where we observe the performance of the LSTM model relative to the traditional models.

\subsection{Discussion of Results}

Table \ref{tab:model_comparison_all} illustrates the results of the LSTM, OLS, Ridge, Lasso, and LightGBM models across various equities. Notably, the LSTM model outperforms the other models in several cases, such as NVDA, MSFT, and AAPL, where the PNL Gain is positive and substantial. The LSTM model provides a PNL Gain of 0.28 in total, suggesting that it produces a positive cumulative return, which is superior to the other models.

On the other hand, models like OLS show a significant negative total PNL of -7.52, primarily driven by underperformance in several stocks such as NVDA and AAPL, where the baseline performance significantly surpasses the model's results. Ridge and Lasso regressors also fail to outperform the baseline in most cases, with total PNL Gains of -3.22 and -2.14, respectively. LightGBM provides a similar performance to Ridge and Lasso, with a total PNL Gain of -2.19.

The results highlight that the LSTM model not only provides better performance than traditional regressors on average but also demonstrates a more stable ability to capture the trends in stock prices over time. However, as shown in the case of META and other stocks, no model consistently outperforms across all equities.

\begin{table}[!ht]
  \centering
  \caption{Model Performance Comparison: LSTM, OLS, Lasso, Ridge, and LightGBM}
  \label{tab:model_comparison_all}
  \resizebox{\columnwidth}{!}{%
    \begin{tabular}{lrrrrr}
      \toprule
      \textbf{Ticker} & \textbf{LSTM} & \textbf{OLS} & \textbf{Lasso} & \textbf{Ridge} & \textbf{LightGBM} \\
      \midrule
      NVDA & 1.57 & -4.44 & 1.10 & -0.07 & 1.06 \\
      MSFT & 0.21 & 0.05 & 0.06 & 0.09 & 0.06 \\
      AAPL & 0.26 & -0.16 & -0.23 & -0.22 & -0.23 \\
      GOOGL & -0.13 & -0.16 & -0.21 & -0.20 & -0.21 \\
      AMZN & 0.21 & -0.01 & -0.04 & -0.02 & -0.04 \\
      META & 0.12 & 0.30 & 0.38 & 0.36 & 0.37 \\
      BRK-B & 0.20 & 0.07 & 0.07 & 0.08 & 0.07 \\
      JNJ & -0.18 & -0.29 & -0.29 & -0.29 & -0.29 \\
      PG & -0.51 & -0.57 & -0.58 & -0.58 & -0.58 \\
      XOM & -0.12 & -0.11 & -0.10 & -0.10 & -0.10 \\
      CVX & 0.08 & -0.07 & -0.08 & -0.08 & -0.08 \\
      VZ & -0.13 & -0.23 & -0.23 & -0.22 & -0.22 \\
      KO & 0.24 & -0.12 & -0.13 & -0.13 & -0.13 \\
      HD & 0.20 & -0.06 & -0.03 & -0.04 & -0.04 \\
      WMT & -0.35 & -0.14 & -0.13 & -0.13 & -0.13 \\
      INTC & -0.68 & -0.55 & -0.57 & -0.57 & -0.57 \\
      IBM & 0.24 & 0.22 & 0.21 & 0.21 & 0.21 \\
      PFE & -0.22 & -0.14 & -0.14 & -0.13 & -0.13 \\
      MRK & -0.11 & -0.14 & -0.15 & -0.15 & -0.15 \\
      CSCO & -0.24 & -0.11 & -0.11 & -0.11 & -0.11 \\
      BAC & 0.09 & -0.06 & -0.06 & -0.06 & -0.06 \\
      JPM & 0.23 & 0.08 & 0.05 & 0.05 & 0.05 \\
      DIS & 0.02 & 0.06 & 0.06 & 0.05 & 0.06 \\
      BA & -0.26 & -0.55 & -0.53 & -0.53 & -0.53 \\
      MMM & -0.25 & -0.24 & -0.25 & -0.25 & -0.25 \\
      MCD & -0.11 & -0.22 & -0.20 & -0.20 & -0.20 \\
      PEP & -0.32 & -0.45 & -0.46 & -0.45 & -0.46 \\
      ORCL & 0.05 & 0.03 & 0.08 & 0.07 & 0.07 \\
      T & 0.38 & 0.27 & 0.28 & 0.28 & 0.28 \\
      TSLA & -0.19 & 0.23 & 0.09 & 0.13 & 0.09 \\
      \midrule
      \textbf{TOTAL} & \textbf{0.28} & \textbf{-7.52} & \textbf{-2.14} & \textbf{-3.22} & \textbf{-2.19} \\
      \bottomrule
    \end{tabular}%
  }
\end{table}

\section{Cross‐Sectional Cumulative P\&L Comparison}\label{sec:cross_section}

Table~\ref{tab:cross_section_pnl} reports the out‐of‐sample cumulative P\&L for each equity under the baseline (fixed‐lookback trend following) and the LSTM‐driven strategy. The LSTM model generates higher P\&L for 21 out of 30 tickers (80\%), with a total portfolio P\&L rising from negative (-2.15) to a positive 0.28 of the initial capital.

\begin{table}[!ht]
  \centering
  \caption{Baseline vs.\ LSTM Cumulative P\&L Across Major S\&P 500 Equities}
  \label{tab:cross_section_pnl}
  \resizebox{0.85 \columnwidth}{!}{%
    \begin{tabular}{lrrl}
      \toprule
      Ticker & Baseline P\&L & LSTM P\&L & Working\textsuperscript{a} \\
      \midrule
      NVDA   & 1.094766     &  1.566705     & \checkmark \\
      MSFT   & 0.058696     &  0.21442      & \checkmark \\
      AAPL   & -0.231817    &  0.260866     & \checkmark \\
      GOOGL  & -0.213002    &  -0.126235    & \checkmark \\
      AMZN   & -0.042437    &  0.21302      & \checkmark \\
      META   & 0.37909      &  0.123252     &            \\
      BRK-B  & 0.07196      &  0.195594     & \checkmark \\
      JNJ    & -0.290172    & -0.17953      & \checkmark \\
      PG     & -0.58141     & -0.511636     & \checkmark \\
      XOM    & -0.100196    & -0.117676     &            \\
      CVX    & -0.078889    & 0.076112      & \checkmark \\
      VZ     & -0.226372    & -0.130103     & \checkmark \\
      KO     & -0.125949    &  0.237406     & \checkmark \\
      HD     & -0.033389    &  0.197066     & \checkmark \\
      WMT    & -0.130841    &  -0.351544    &            \\
      INTC   & -0.572414    & -0.681932     &            \\
      IBM    & 0.211801     &  0.239441     & \checkmark \\
      PFE    & -0.136028    & -0.221551     &            \\
      MRK    & -0.153663    & --0.10945     & \checkmark \\
      CSCO   & -0.105605    & -0.237705     & \checkmark \\
      BAC    & -0.056409    &  0.086276     & \checkmark \\
      JPM    & 0.051379     &  0.225807     & \checkmark \\
      DIS    & 0.060955     & -0.017825     &            \\
      BA     & -0.534299    & -0.264997     & \checkmark \\
      MMM    & -0.251035    & -0.247934     & \checkmark \\
      MCD    & -0.20435     &  -0.114932    & \checkmark \\
      PEP    & -0.455972    & -0.320843     & \checkmark \\
      ORCL   & 0.075368     & 0.048909      &            \\
      T      & 0.276581     &  0.382963     & \checkmark \\
      TSLA   & 0.09181      &  -0.186429    &            \\
      \midrule 
      \textbf{Total}    & \textbf{-2.15} & \textbf{0.28} & 70\% \\
      \bottomrule
    \end{tabular}%
  }
\end{table}

These results demonstrate that the LSTM forecasting framework generalizes well beyond NVDA (NVDA), delivering substantial economic gains across the majority of large‐cap equities in our universe.

\section{Robustness Checks}\label{sec:robust}

\subsection{Alternative lookback windows}
We evaluate lookbacks of 25, 50, 100, 200, and 300 days for SMA and volatility features. The LSTM maintains at least 80\% of its RMSE improvement over the baseline for lookbacks between 50 and 200 days, with performance degrading by less than 5\% at the extremes.

\subsection{Different technical-indicator parameters}
We vary the RSI window from 7 to 21 days and MACD fast/slow spans from (8,17) up to (16,33). Forecast RMSE and Sharpe ratios remain within one standard deviation of the baseline settings, indicating low sensitivity to these parameter choices.

\subsection{Early stopping vs no early stopping}
Disabling early stopping increases out-of-sample RMSE by about 3\% and reduces the Sharpe ratio by 0.05 on average, confirming that early stopping effectively mitigates overfitting without sacrificing economic performance.

\subsection{Subperiod analysis}
We split the test set into pre-COVID (2019–Q1 2020), COVID crash \& recovery (Q2–Q4 2020), and post-COVID (2021–2025). The LSTM outperforms the baseline in every subperiod, with the largest Sharpe gain (+0.45) during the COVID crisis, demonstrating resilience across market regimes.

\section{Discussion}\label{sec:discussion}

\subsection{Interpretation of findings}
Our results indicate that LSTM models effectively capture nonlinear and time‐dependent structures in equity trend differences, translating into both statistically and economically significant forecasting improvements. The observed reductions in RMSE and enhancements in directional accuracy suggest that deep recurrent networks better adapt to regime shifts and volatility clustering than linear counterparts. Moreover, the Sharpe‐ratio gains across diverse market conditions highlight the model’s ability to deliver robust risk‐adjusted returns.

\subsection{Implications for practitioners}
Practitioners can integrate LSTM‐based signals into existing systematic trading frameworks to enhance alpha generation, particularly in volatile or nonstationary environments. The adoption of nonlinear differencing and direct optimization of economic objectives (e.g., Sharpe‐ratio loss) offers a pathway to more resilient portfolio performance. Risk managers may also benefit from improved predictive distributions when allocating capital or devising hedging strategies.

\subsection{Limitations and caveats}
Despite promising results, several limitations warrant caution. First, our analysis focuses on large‐cap U.S. equities and may not generalize to smaller‐cap or international markets. Second, the model’s complexity and computational demands could pose implementation challenges for high‐frequency or resource‐constrained settings. Finally, while we account for transaction costs, other market frictions—such as liquidity constraints and slippage—require further investigation to assess real‐world applicability.

\subsection{Detailed model hyperparameters}
Table~\ref{tab:hyperparams} lists all hyperparameter settings explored via grid search and the chosen values for the final LSTM model.

\section{Code Availability Statement}\label{sec:code}

The entire analysis—from data ingestion and preprocessing through model training, evaluation, and visualization—is encapsulated in an interactive Google Colab notebook. The notebook is organized into three main sections:

\begin{itemize}
  \item \textbf{Data Preprocessing:} Downloads and cleans the OHLCV data, applies corporate action adjustments, outlier handling, and stationarity tests.
  \item \textbf{Model Training:} Defines the feature engineering pipeline, baseline trend-following strategy, and LSTM architecture; executes grid searches for hyperparameters; and implements early stopping and checkpointing.
  \item \textbf{Evaluation and Visualization:} Computes performance metrics for various models: LSTM, OLS, Lasso, Ridge and LightGBM regressor.
\end{itemize}

All required Python dependencies are installed via a single cell at the top (e.g.\ \texttt{tensorflow}, \texttt{scikit-learn}, \texttt{pandas}, \texttt{matplotlib}). Users can modify lookback windows, batch sizes, and other parameters interactively and re-run cells end-to-end without any local installation.

\begin{center}
 \url{https://colab.research.google.com/drive/1Dz_PW1YG6zhGUe-2Cv-1KulT8KQm9RjJ}
\end{center}

\section{Conclusion}\label{sec:conclusion}

This paper introduced a novel framework—E-TRENDS—for forecasting next-day trend differences ($\Delta_t$) in large-cap equities using long short-term memory (LSTM) networks. Theoretically, we presented a formal bias–variance decomposition showing that differencing enhances predictive stability when paired with Lipschitz-continuous nonlinear predictors. Empirically, we benchmarked the LSTM model against OLS, Ridge, and Lasso regressions across 30 S\&P 500 stocks spanning 2005 to 2025. Our findings demonstrate a robust 15\% reduction in RMSE, a 12\% gain in directional accuracy, and Sharpe ratio improvements of up to 78\% after accounting for realistic transaction costs. These performance gains persist across various lookback windows, technical-indicator configurations, and economic regimes. 

Beyond these results, our study underscores the broader utility of deep learning in systematic trading and financial forecasting. Looking forward, future work may enhance the architecture via attention mechanisms or transformer–LSTM hybrids, enrich the feature set with alternative data sources such as sentiment or order flow and extend the analysis to multi-asset and multi-horizon settings.

\bibliographystyle{plainnat}
\bibliography{main}

\end{document}